\documentclass[aps,prxquantum,reprint,superscriptaddress,longbibliography,nofootinbib]{revtex4-2}

\usepackage{amsmath,amssymb,amsthm,bm,graphicx,booktabs}
\usepackage[colorlinks=true,linkcolor=blue,citecolor=blue,urlcolor=blue]{hyperref}

\newtheorem{theorem}{Theorem}
\newtheorem{lemma}[theorem]{Lemma}
\newtheorem{proposition}[theorem]{Proposition}
\newtheorem{corollary}[theorem]{Corollary}

\newcommand{\dd}{\,\mathrm{d}}
\newcommand{\CC}{\mathbb{C}}
\newcommand{\ZZ}{\mathbb{Z}}
\newcommand{\tr}{\operatorname{tr}}
\newcommand{\Om}{\Omega}
\newcommand{\Xt}{\widetilde X}

\begin{document}

\title{Time-optimal robust quantum control for partially known errors}

\author{Lloren\c{c} Balada Gaggioli}
\affiliation{Faculty of Electrical Engineering, Czech Technical University in Prague, Czech Republic}
\affiliation{LAAS-CNRS, Universit\'e de Toulouse, Toulouse, France}

\date{October 15, 2026}

\begin{abstract}
Robust quantum control usually assumes one of two extremes: the error is
of a known form and only its strength is unknown, or nothing at all is
known about it. In practice one often sits in between. The size of a
static error is unknown, but its \emph{class} is known, for example, it
only shifts energy levels, or it acts on each subsystem separately. We
ask how fast a control can cancel, to first order, every error in such a
class. A time-independent Hamiltonian can do this if and only if all
errors in the class have zero diagonal in a common basis. In that case an
equally spaced energy ladder in this basis does it after one period. A
lower bound that holds for any time-dependent control with bounded drive
power shows that this simple constant control is optimal for several
natural classes, including unknown level shifts and unknown local errors
on two subsystems. Completely unknown errors need a time-dependent
control, and there, knowing the error class shortens the minimum time by
a factor $\sqrt{d^2+1}$. So partial knowledge of the error changes not
only how long an optimal robust control takes, but also its form.
\end{abstract}

\maketitle

\section{Introduction}

Static or slowly drifting errors are among the most common imperfections of quantum hardware: a
transition frequency that has drifted since the last calibration, a stray magnetic field, a
residual coupling between neighbouring qubits.  A powerful way to suppress them is to shape the
control so that the error, seen from the frame that moves with the control, averages to zero.
This idea goes back to average Hamiltonian theory in nuclear magnetic
resonance~\cite{HaeberlenWaugh1968}.  It underlies dynamical
decoupling~\cite{ViolaLloyd1998,ViolaKnillLloyd1999}, its bounded-control
versions~\cite{ViolaKnill2003}, and dynamically corrected gates~\cite{KhodjastehViola2009}.  Because
any real drive has a bounded amplitude, the averaging takes time.  How much time depends on how much
is known about the error, and this dependence is the subject of this paper.

When the error operator is known, the problem is well understood.  Analytical conditions for smooth
pulses that cancel a known quasi-static error were derived in ~\cite{Barnes2015}, and all such
first-order corrections of a qubit can be described geometrically as closed
curves~\cite{Zeng2019,Buterakos2021}.  For a qubit with a specified error, the fastest robust pulses
have been found analytically~\cite{ZengBarnes2018}, by optimal-control
theory~\cite{VanDamme2017,FresseColson2025}, and numerically~\cite{Cao2024}.

That the unwanted interaction is often known only up to a class is an old observation.  Already the
original formulation of dynamical decoupling averages over a whole space of interaction
operators~\cite{ViolaKnillLloyd1999}, and decoupling can be selective, removing some interactions
while keeping others~\cite{ViolaLloydKnill1999}.  Efficient pulse sequences for structured couplings
follow from orthogonal arrays~\cite{StollsteimerMahler2001,RottelerWocjan2006}, symmetries of the
unwanted Hamiltonian~\cite{Read2025}, and classical codes, which give time-optimal pulsed sequences
for chosen sets of $k$-local interactions~\cite{Nguyen2026}.  In robust control, Poggi \emph{et al.}\
showed that first-order robustness against \emph{every} error is equivalent to the control path
forming a unitary $1$-design~\cite{Poggi2024}, and continuous paths with this property were
constructed in any dimension~\cite{Yang2026}.  Zeng and Deng bounded the number of distinct unitaries
that a schedule robust to a given noise space must visit~\cite{ZengDeng2025}.

What remains open is the exact answer for continuous, amplitude-limited control.  Given an error
class, what is the shortest time in which a bounded drive, with arbitrary time dependence, can
cancel every error in it to first order, and what does a control attaining that time look like?

The central message of this paper is that, for the error classes studied here, \emph{partial
knowledge of the error makes a time-independent Hamiltonian globally time optimal}.  The mechanism is
elementary [Fig.~\ref{fig:ladder}(a)].  Suppose every error in the class has zero diagonal in some
basis, and apply a constant Hamiltonian whose energy levels are equally spaced in that basis.  Every
error component then connects two different levels.  In the frame that moves with the control, it
oscillates at a nonzero multiple of the level spacing, and it averages to zero after one period.  Our
results make this precise:
\begin{enumerate}
\item A criterion: a constant control can cancel an error class if and only if the class has zero
diagonal in some basis.  The cheapest such control has equally spaced levels
(Sec.~\ref{sec:constant}).
\item A lower bound valid for \emph{every} control, time dependent or not.  The error class enters
only through the number of independent time modes it forces the propagator to occupy
(Sec.~\ref{sec:bound}).
\item For unknown level shifts of a $d$-level system, for unknown local errors on two
equal-dimensional subsystems, and for a qubit error in a plane, the two coincide: a constant
Hamiltonian attains the minimum duration $\Om T=2\pi\sqrt{d(d^2-1)/6}$.  For completely unknown errors the bound is larger and no constant control can reach it, knowing the error class therefore reduces the minimum time by exactly $\sqrt{d^2+1}$
(Table~\ref{tab:results}).
\end{enumerate}
The optimal controls are physically simple (Sec.~\ref{sec:controls}).  For a qudit with unknown
level shifts it is a constant chiral drive of all transitions.  For two qubits with unknown local
fields it is a constant anisotropic two-body coupling, which beats every combination of single-qubit
drives.

Cancelling static errors by toggling-frame averaging with continuous fields, decoupling and unitary-design robustness are well established, and
counting arguments already tell how many distinct unitaries a robust schedule needs~\cite{ZengDeng2025}.
What is new here is a continuous-time lower bound that is set by the structure of the error class and
holds for arbitrary time-dependent controls, together with simple controls that attain it exactly.
The minimum times are therefore global optima, not optima within a pulse ansatz.

These controls are also distinct from earlier uses of constant or continuous fields against noise.
Continuous dynamical decoupling~\cite{Chen2006,Fanchini2007,FanchiniNapolitano2007}, including its
qudit version~\cite{Napolitano2021}, uses time-dependent fields.  Constant dressing fields protect
selected transitions or encoded states~\cite{Timoney2011,Trypogeorgos2018,Barfuss2018,Kolbl2019},
exchange couplings protect some entangled states~\cite{De2011}, and energy penalties protect encoded
subspaces when the gap is large~\cite{BaconBrownWhaley2001,JordanFarhiShor2006,Bookatz2015,MarvianLidar2014};
in the strong-coupling limit these effects reduce to a projection onto the eigenspaces of the strong
Hamiltonian~\cite{Facchi2004,Burgarth2019}.  The controls derived here instead protect the full
identity operation, cancel every error in the class to first order at a finite time, and that time is
the shortest possible.

\begin{figure*}[t]
\includegraphics[width=\textwidth]{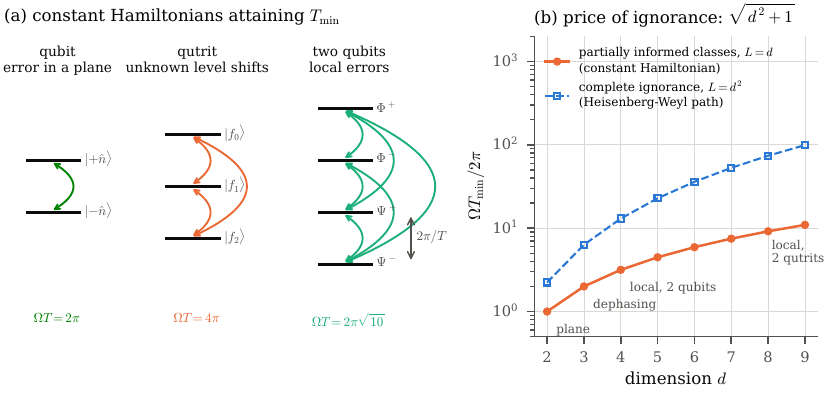}
\caption{Constant controls for partially known errors.  (a)~Horizontal lines are the energy levels of
the constant control Hamiltonian, spaced by $2\pi/T$.  Arrows are the matrix elements of the possible
errors in that basis.  Since every error has zero diagonal there, each component connects two
different levels; in the frame moving with the control it oscillates at a nonzero multiple of
$2\pi/T$ and integrates to zero over one period (the arrows are oscillating matrix elements, not
transitions).  Left: a qubit error confined to a plane, cancelled by a $2\pi$ rotation about the
normal $\hat n$.  Middle: unknown level shifts of a qutrit, cancelled by a Hamiltonian diagonal in the
Fourier basis $|f_k\rangle$, a chiral drive of all transitions.  Right: unknown local errors on two
qubits, cancelled by an anisotropic $XX$--$ZZ$ coupling diagonal in the Bell basis.  (b)~Minimum
duration~\eqref{eq:Tmin} for the partially informed classes (plane, level shifts, local errors on two
equal subsystems), compared with that for a completely unknown error~\eqref{eq:Tuniv}, which requires
time-dependent control and is $\sqrt{d^2+1}$ times longer.}
\label{fig:ladder}
\end{figure*}

\section{Setting}

\paragraph*{Control.}  We consider a $d$-level system whose Hamiltonian $H(t)$ we choose, with
arbitrary (bounded, measurable) time dependence.  Its \emph{propagator} $U(t)$ solves $\dot U=-iHU$
with $U(0)=I$.  We expand $H(t)=\frac12\sum_au_a(t)\lambda_a$ in the standard traceless Hermitian
basis matrices $\lambda_a$ (Pauli matrices for $d=2$, Gell-Mann matrices for $d=3$), normalized as
$\tr(\lambda_a\lambda_b)=2\delta_{ab}$, and bound the control amplitude,
\begin{equation}
 |\bm u(t)|^2=2\|H(t)\|_F^2\le\Om^2 ,
 \label{eq:bound}
\end{equation}
where $\|A\|_F^2=\tr(A^\dagger A)$.  For a qubit, $\bm u$ is the Rabi vector and $\Om$ the maximal Rabi
frequency.  In general, Eq.~\eqref{eq:bound} bounds the total drive power summed over all couplings.
It is not a bound on each coupling separately, nor on the operator norm; the criteria below do not
depend on this choice, but the numerical values of the minimum times do (Sec.~\ref{sec:scope}).  We
look for controls that implement the identity at a final time $T$, $U(T)=e^{i\theta}I$ up to an
irrelevant global phase.  Such controls serve as robust idle operations and as decoupling blocks.

\paragraph*{Errors.}  The actual Hamiltonian is $H(t)+\epsilon X$, where $\epsilon X$ is a small
static error.  The operator $X$ is unknown, but it is known to belong to an \emph{error class} $S$, a
real linear space of traceless Hermitian operators.  Three classes will be central:
\begin{itemize}
\item \emph{Error in a plane} (qubit): $X=\bm v\cdot\bm\sigma$ with $\bm v$ perpendicular to a known
axis $\hat n$, for example a transverse stray field of unknown direction.
\item \emph{Level shifts} (qudit): $X$ is diagonal in a known basis $\{|j\rangle\}$.  The error shifts
the energy levels by unknown amounts but does not drive transitions.  This describes frequency drifts
and the inhomogeneous spread of level shifts across an ensemble (static dephasing), not Markovian
decoherence.
\item \emph{Local errors} on two subsystems of equal dimension $d_A$, so $d=d_A^2$: $X=A\otimes
I+I\otimes B$ with arbitrary traceless $A$ and $B$.  Each subsystem feels its own unknown field, and
there is no unknown coupling between them.
\end{itemize}
The reference limit is \emph{complete ignorance}, where $S$ contains every traceless operator.

\paragraph*{Robustness.}  The error acts on the system while the control rotates it, so its effect is
best seen in the \emph{toggling frame}, the frame that moves with the control.  There the error reads
$U^\dagger(t)XU(t)$, and to first order in $\epsilon$ the final propagator is
\begin{equation}
 U_\epsilon(T)=U(T)\Big[I-i\epsilon\int_0^TU^\dagger(t)XU(t)\dd t+O(\epsilon^2)\Big].
 \label{eq:firstorder}
\end{equation}
We call a control \emph{robust to $S$} if the first-order term vanishes for every error in the
class,
\begin{equation}
 \int_0^TU(t)^\dagger XU(t)\dd t=0\qquad\text{for every }X\in S .
 \label{eq:robust}
\end{equation}
This is a first-order condition: it cancels the leading contribution of the error, not the error
itself.  The second-order term generally survives, so the infidelity
$1-|\tr(U^\dagger(T)U_\epsilon(T))/d|^2$ of a robust control scales as $\epsilon^4$ instead of
$\epsilon^2$ (Appendix~\ref{app:magnus}).  Because Eq.~\eqref{eq:robust} is linear in $X$, it also
cancels, to first order, a coupling $\sum_aX_a\otimes B_a$ ($X_a\in S$) to a slowly varying
environment.

\section{Constant controls}\label{sec:constant}

Consider a time-independent Hamiltonian $H=\sum_kE_kP_k$ with distinct energies $E_k$ and eigenspace
projectors $P_k$.  Its propagator is $U(t)=\sum_ke^{-iE_kt}P_k$.  In the toggling frame, each block of
the error rotates at the corresponding energy difference, so
\begin{equation}
 \int_0^TU^\dagger XU\dd t=\sum_{k,l}\Big(\int_0^Te^{i(E_k-E_l)t}\dd t\Big)P_kXP_l .
 \label{eq:constH}
\end{equation}
The time integral vanishes exactly when $(E_k-E_l)T$ is a nonzero multiple of $2\pi$.  Blocks within
one energy level, including all diagonal elements, do not rotate and never average out.  A constant
Hamiltonian is therefore robust to $S$ if and only if
\begin{itemize}
\item[(a)] every $X\in S$ vanishes inside each energy level, $P_kXP_k=0$ (for a degenerate level this
means the whole block, not only its diagonal), and
\item[(b)] every pair of levels that $S$ connects is split by a nonzero multiple of $2\pi/T$.
\end{itemize}
Returning to the identity, $U(T)\propto I$, requires in addition that \emph{all} pairs of levels be
split by multiples of $2\pi/T$.

This costs drive power: the levels must be spread apart, which requires a large $\|H\|_F$.  The
cheapest arrangement follows from an elementary counting lemma, which we will use twice.

\begin{lemma}[Equal spacing is cheapest]\label{lem:spacing}
Let $y_1<\dots<y_N$ be distinct real numbers whose differences are integers, and let $\bar y$ be
their mean.  Then $\sum_i(y_i-\bar y)^2\ge N(N^2-1)/12$, with equality if and only if consecutive
values differ by exactly one.  Consequently $\sum_iy_i^2\ge N(N^2-1)/12$, with equality if and only
if the $y_i$ are $-\frac{N-1}2,-\frac{N-3}2,\dots,\frac{N-1}2$.
\end{lemma}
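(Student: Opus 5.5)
Since the differences $y_{i+1}-y_i$ are positive integers, they are each at least one, so $y_j-y_i\ge j-i$ for all $i<j$. The plan is to use the translation-invariant pairwise form of the variance,
\begin{equation*}
 \sum_i(y_i-\bar y)^2=\frac1N\sum_{i<j}(y_j-y_i)^2 ,
\end{equation*}
which follows by expanding the right-hand side and using $\sum_i(y_i-\bar y)=0$. Each term then satisfies $(y_j-y_i)^2\ge(j-i)^2$, so the left-hand side is at least $\frac1N\sum_{i<j}(j-i)^2$.

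That last sum is exactly the same expression evaluated on the equally spaced configuration $y_i=i$. It therefore equals the variance sum of $\{1,\dots,N\}$, namely $\sum_i(i-\frac{N+1}2)^2=N(N^2-1)/12$. This is a routine computation from $\sum i^2=N(N+1)(2N+1)/6$.

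For the equality case, equality forces $(y_j-y_i)^2=(j-i)^2$ for every pair. In particular this holds for consecutive pairs, so each gap is exactly one. Conversely, unit gaps give equality by the same computation.

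For the second statement, write $\sum_iy_i^2=\sum_i(y_i-\bar y)^2+N\bar y^2$. This is at least the variance sum, with equality iff $\bar y=0$. Equality overall therefore requires both unit spacing and zero mean, which pins the values to $-\frac{N-1}2,\dots,\frac{N-1}2$.

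I do not expect a genuine obstacle here. The only step needing care is the pairwise-variance identity, and it is a direct expansion.
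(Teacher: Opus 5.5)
Your proof is correct and follows essentially the same route as the paper's: the pairwise form of the variance (the paper writes it as $\frac1{2N}\sum_{i,j}$, you as $\frac1N\sum_{i<j}$), the termwise bound $|y_j-y_i|\ge|j-i|$, evaluation on the equally spaced configuration, and the decomposition $\sum_iy_i^2=\sum_i(y_i-\bar y)^2+N\bar y^2$ for the second claim. Your equality analysis matches the paper's, and you spell out the final step (unit spacing with zero mean fixes $y_1=-\frac{N-1}2$) slightly more explicitly.
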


\begin{proof}
Distinct numbers with integer differences satisfy $|y_i-y_j|\ge|i-j|$, since $y_i-y_j$ is a sum of
$|i-j|$ positive integer gaps.  Therefore
$\sum_i(y_i-\bar y)^2=\frac1{2N}\sum_{i,j}(y_i-y_j)^2\ge\frac1{2N}\sum_{i,j}(i-j)^2=\frac{N(N^2-1)}{12}$,
with equality iff all gaps equal one.  The second statement follows from
$\sum_iy_i^2=\sum_i(y_i-\bar y)^2+N\bar y^2$.
\end{proof}

\begin{theorem}[Constant robust control]\label{thm:constant}
(i) A constant Hamiltonian robust to $S$ that returns to the identity exists if and only if there is
an orthonormal basis in which every $X\in S$ has zero diagonal.  (ii) Suppose such a Hamiltonian has
$K$ distinct energy levels, each with the same degeneracy $d/K$.  Then
\begin{equation}
 \Om T\ \ge\ 2\pi\sqrt{\frac{d\,(K^2-1)}{6}},
 \label{eq:law}
\end{equation}
with equality if and only if the levels are equally spaced by $2\pi/T$ and $|\bm u|=\Om$.
\end{theorem}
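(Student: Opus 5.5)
For part (i) we use the characterization of robust constant Hamiltonians given by conditions (a) and (b) after Eq.~\eqref{eq:constH}.

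\emph{Necessity.} Let $H=\sum_kE_kP_k$ be robust and return to the identity. Pick any orthonormal basis adapted to the eigenspace decomposition, i.e.\ each basis vector $|e\rangle$ lies in some range of $P_k$. Condition (a) gives $P_kXP_k=0$ for every $X\in S$. Since $\langle e|X|e\rangle=\langle e|P_kXP_k|e\rangle$, every diagonal entry vanishes in that basis.

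\emph{Sufficiency.} Suppose every $X\in S$ has zero diagonal in a basis $\{|e_j\rangle\}_{j=1}^d$. Take the nondegenerate ladder
\[
H=\frac{2\pi}{T}\sum_j\Big(j-\tfrac{d+1}{2}\Big)|e_j\rangle\langle e_j| .
\]
Then $P_jXP_j=\langle e_j|X|e_j\rangle\,|e_j\rangle\langle e_j|=0$, so (a) holds. Every pair of levels is split by a nonzero multiple of $2\pi/T$, so (b) holds and $U(T)=e^{i\theta}I$. The Hamiltonian is traceless by construction, so it fits the control parametrization.

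For part (ii), first restrict to the traceless part of $H$. A multiple of $I$ only contributes a global phase, and $H$ is expanded in the traceless $\lambda_a$ anyway.

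Returning to the identity requires $(E_k-E_l)T\in2\pi\ZZ$ for all $k,l$, and the $K$ levels are distinct. Hence the numbers $y_k=E_kT/(2\pi)$ are distinct with integer differences, and tracelessness with equal degeneracy $d/K$ gives $\sum_ky_k=0$. Then
\[
\|H\|_F^2=\frac{d}{K}\sum_kE_k^2=\frac{d}{K}\Big(\frac{2\pi}{T}\Big)^2\sum_ky_k^2 .
\]
Lemma~\ref{lem:spacing} gives $\sum_ky_k^2\ge K(K^2-1)/12$, hence
\[
\|H\|_F^2\ge\Big(\frac{2\pi}{T}\Big)^2\frac{d(K^2-1)}{12}.
\]
Combining with Eq.~\eqref{eq:bound}, $\Om^2\ge|\bm u|^2=2\|H\|_F^2\ge(2\pi/T)^2\,d(K^2-1)/6$, which is Eq.~\eqref{eq:law}.

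Equality holds if and only if both inequalities are tight. Tightness in the lemma means the $y_k$ are consecutive, i.e.\ the levels are equally spaced by $2\pi/T$ (centred because tracelessness forces $\bar y=0$). Tightness in the power bound means $|\bm u|=\Om$. Conversely, such a ladder attains the bound.

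The main subtlety is the global-phase and trace bookkeeping. One must justify taking $\sum_ky_k=0$, i.e.\ that removing the trace never increases the cost; this follows from the second statement of the lemma, or simply because controls are traceless. One must also use the equal-degeneracy hypothesis so that $\|H\|_F^2$ reduces to $(d/K)\sum_kE_k^2$. Without equal degeneracies the weighted sum would need a different bound, which is why the statement assumes it.
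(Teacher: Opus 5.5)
Your proof is correct and follows the paper's argument essentially step by step. For (i) you derive necessity from condition (a) in an eigenbasis of $H$ and sufficiency from the nondegenerate equally spaced ladder in the zero-diagonal basis. For (ii) you write $E_k=2\pi y_k/T$, use tracelessness and equal degeneracy to get $\bar y=0$ and $\|H\|_F^2=(2\pi/T)^2\frac dK\sum_k y_k^2$, and then apply Lemma~\ref{lem:spacing}. Your explicit treatment of degenerate eigenspaces and of the equality case is slightly more careful than the paper's.
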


\begin{proof}
(i) If $H$ is robust, condition (a) holds, so every $X\in S$ has zero diagonal in any eigenbasis of
$H$.  Conversely, if every $X\in S$ has zero diagonal in a basis $\{|e_k\rangle\}$, then the
nondegenerate ladder
\begin{equation}
 H=\frac{2\pi}T\sum_{k=0}^{d-1}\Big(k-\frac{d-1}2\Big)|e_k\rangle\langle e_k|
 \label{eq:ladder}
\end{equation}
satisfies (a), (b) and $U(T)\propto I$.  (ii) Write the energies as $E=2\pi y/T$.  Returning to the
identity requires the $K$ values $y_k$ to differ by nonzero integers.  Since $H$ is traceless and the
degeneracies are equal, $\bar y=0$, and
$\|H\|_F^2=(2\pi/T)^2\frac dK\sum_ky_k^2\ge(2\pi/T)^2\frac{d(K^2-1)}{12}$ by Lemma~\ref{lem:spacing}.
With $|\bm u|^2=2\|H\|_F^2\le\Om^2$ this gives Eq.~\eqref{eq:law}.
\end{proof}

Part (ii) assumes equal degeneracies; we do not determine the cheapest constant spectrum for
arbitrary error classes.  For the classes of interest, the ladder~\eqref{eq:ladder} is nondegenerate,
$K=d$.

\paragraph*{When does a zero-diagonal basis exist?}  Bringing a set of matrices to zero diagonal by
one common change of basis is known in matrix theory as \emph{simultaneous hollowization}.  A single
traceless matrix can always be brought to zero diagonal~\cite{Fillmore1969}, so every \emph{single}
known error, in any dimension, can be cancelled by a constant control.  For two or more matrices the
problem is subtle~\cite{DammFassbender2020,Nicholus2026}.  Theorem~\ref{thm:constant}(i) shows that
this mathematical problem decides whether time dependence is necessary in robust control.  One
obstruction is easy to state: if $S$ contains a nonzero operator that is diagonal in the candidate
basis, that basis fails.  For complete ignorance, $S$ contains $|e_1\rangle\langle e_1|-I/d$ for
every basis, so complete ignorance \emph{forces} time-dependent control.  For a qubit, a zero-diagonal
basis exists exactly when $S$ is at most two-dimensional.

\paragraph*{The three classes.}  All three classes of interest admit a zero-diagonal basis.
\begin{itemize}
\item \emph{Plane.}  In the eigenbasis of $\hat n\cdot\bm\sigma$, every Pauli operator perpendicular
to $\hat n$ is off-diagonal.  The constant control is a $2\pi$ rotation about $\hat n$.
\item \emph{Level shifts.}  In the Fourier basis $|f_k\rangle=d^{-1/2}\sum_je^{2\pi ijk/d}|j\rangle$,
every diagonal operator has diagonal elements $\langle f_k|X|f_k\rangle=\tr X/d=0$.  The constant
control is
\begin{equation}
 H_{\rm D}=-\frac{2\pi}T\sum_{k=0}^{d-1}\Big(k-\frac{d-1}2\Big)|f_k\rangle\langle f_k| .
 \label{eq:HD}
\end{equation}
\item \emph{Local errors.}  In any basis of maximally entangled states, every local operator has
diagonal elements $\langle\Phi|A\otimes I|\Phi\rangle=\tr A/d_A=0$, because each half of a maximally
entangled state is maximally mixed.  For two qubits this is the Bell basis $\{\Phi^\pm,\Psi^\pm\}$; for
general $d_A$ one can use the states $(I\otimes \mathsf X^p\mathsf Z^q)\sum_j|jj\rangle/\sqrt{d_A}$,
with $\mathsf X$ the cyclic shift and $\mathsf Z$ the clock operator.  The constant control is a
ladder~\eqref{eq:ladder} in this basis.
\end{itemize}

\section{No time-dependent control is faster}\label{sec:bound}

Theorem~\ref{thm:constant} compares constant controls only with each other.  A time-dependent pulse
might do better.  To rule this out we need a bound that holds for every control.  The idea is to treat
the $d^2$ matrix elements $U_{jk}(t)$ of the propagator as signals in time:
\begin{itemize}
\item robustness forces these signals to spread over many independent directions;
\item a bounded drive limits how fast they can oscillate;
\item signals that return to their initial value are built from Fourier harmonics of the period, and
high harmonics cost drive power.
\end{itemize}
Counting the harmonics then bounds the time.

\paragraph*{Signals and their overlaps.}  Define the normalized signals
$\phi_{jk}(t)=\sqrt{d/T}\,U_{jk}(t)$ and their overlap (Gram) matrix
\begin{equation}
 \Gamma_{(jm),(ln)}=\int_0^T\overline{\phi_{jm}(t)}\,\phi_{ln}(t)\dd t .
 \label{eq:Gamma}
\end{equation}
$\Gamma$ is a positive $d^2\times d^2$ matrix with trace $d^2$.  Three facts translate the control
problem into a statement about these signals.

\emph{Robustness.}  Since $[U^\dagger XU]_{mn}=\sum_{j,l}\overline{U_{jm}}X_{jl}U_{ln}$, the robustness
condition~\eqref{eq:robust} becomes a set of linear equations for $\Gamma$,
\begin{equation}
 \textstyle\sum_{j,l}X_{jl}\,\Gamma_{(jm),(ln)}=0\quad\text{for all }X\in S\text{ and all }m,n .
 \label{eq:relations}
\end{equation}
Unitarity of $U$ adds $\sum_j\Gamma_{(jm),(jn)}=d\,\delta_{mn}$.

\emph{Drive power.}  From $\dot U=-iHU$ and the unitary invariance of the Frobenius norm,
$\sum_{jk}|\dot U_{jk}|^2=\|H\|_F^2$.  The drive bound therefore limits the total ``kinetic energy'' of
the signals,
\begin{equation}
 \sum_{j,k}\int_0^T|\dot\phi_{jk}|^2\dd t=\frac d{2T}\int_0^T|\bm u|^2\dd t\le\frac{d\,\Om^2}2 .
 \label{eq:energy}
\end{equation}

\emph{Closure.}  If $U(T)=e^{i\theta}I$, every signal returns to its initial value up to the phase
$e^{i\theta}$.  Such signals are combinations of the harmonics $e^{2\pi iyt/T}$ with
$y\in\ZZ+\theta/2\pi$, and the harmonic $y$ has kinetic energy $(2\pi y/T)^2$.

\paragraph*{Occupation.}  The eigenvalues of $\Gamma$ measure how strongly the signals share
directions in the space of functions of time.  If $\Gamma\le g\,I$ for every control robust to $S$,
no single function, and in particular no single harmonic, can carry more than $g$ of the total weight
$d^2$.  We call $g$ the \emph{maximal occupation} allowed by the error class and write $L=d^2/g$.  The
signals must then spread over at least $L$ harmonics.

The picture is simple.  A larger error class imposes more equations~\eqref{eq:relations}, which leave
less room for the signals to concentrate.  The occupation $g$ becomes smaller, the propagator must
use more harmonics, and high harmonics cost time.  Note that $L$ counts \emph{time modes}, not energy
levels.  For a constant Hamiltonian with $K$ equally degenerate levels the two numbers agree, which
is why they coincide for the optimal controls below.  Enlarging a class can only lower $g$, but not
necessarily strictly; we give an example in Sec.~\ref{sec:controls}.

\begin{theorem}[Lower bound for all controls]\label{thm:lower}
If every control robust to $S$ has $\Gamma\le g\,I$, with $L=d^2/g$ an integer, then every robust
control that returns to the identity up to a phase, time dependent or not, satisfies
\begin{equation}
 \Om T\ \ge\ 2\pi\sqrt{\frac{d\,(L^2-1)}6} .
 \label{eq:lawL}
\end{equation}
It even satisfies the stronger energy bound $T\int_0^T|\bm u|^2\dd t\ge(2\pi)^2d(L^2-1)/6$.
\end{theorem}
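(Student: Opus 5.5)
Expand each signal in the harmonic basis allowed by closure, $e_y(t)=T^{-1/2}e^{2\pi iyt/T}$ with $y\in\ZZ+\theta/2\pi$. These functions form an orthonormal basis of $L^2[0,T]$, and every $\phi_{jk}$ lies in the corresponding Sobolev space because $U(T)=e^{i\theta}I$ makes $\phi_{jk}(T)=e^{i\theta}\phi_{jk}(0)$. The plan is to rewrite the kinetic energy~\eqref{eq:energy} as a weighted sum of spectral weights and then minimize it under the occupation constraint, reducing the problem to Lemma~\ref{lem:spacing}.

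\textbf{Spectral weights.} Define for each allowed $y$ the weight
\[
w_y=\sum_{j,k}|\langle e_y,\phi_{jk}\rangle|^2 .
\]
By Parseval, $\sum_y w_y=\sum_{jk}\|\phi_{jk}\|^2=\tr\Gamma=d^2$. Integrating by parts with the twisted-periodic boundary condition gives
\[
\sum_{jk}\int_0^T|\dot\phi_{jk}|^2\dd t=\sum_y(2\pi y/T)^2w_y .
\]

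\textbf{Occupation bound.} Each weight satisfies $w_y\le g$. To see this, take $c_{jk}=\langle e_y,\phi_{jk}\rangle$ and write $w_y=\sum|c|^2=\langle e_y,\sum_{jk}\bar c_{jk}\phi_{jk}\rangle$. By Cauchy--Schwarz, $w_y^2\le\|\sum\bar c_{jk}\phi_{jk}\|^2=c^\dagger\Gamma c\le g\,w_y$ (up to conjugation conventions). This is exactly the sense in which no single harmonic carries more than $g$.

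\textbf{Linear program.} The problem is now to minimize $\sum_y y^2w_y$ over the lattice $y\in\ZZ+\theta/2\pi$, subject to $0\le w_y\le g$ and $\sum w_y=d^2=Lg$. The minimum fills the $L$ lattice points of smallest $|y|$ to capacity $g$, which is a bathtub or rearrangement argument. Those $L$ points are distinct with integer differences, so Lemma~\ref{lem:spacing} gives $\sum y_i^2\ge L(L^2-1)/12$, and this holds for every $\theta$. Hence
\[
\sum_y y^2w_y\ge g\,L(L^2-1)/12=d^2(L^2-1)/12 .
\]

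\textbf{Conclusion.} Combining with~\eqref{eq:energy},
\[
\frac{d}{2T}\int_0^T|\bm u|^2\dd t\ge\Big(\frac{2\pi}{T}\Big)^2\frac{d^2(L^2-1)}{12},
\]
that is, $T\int_0^T|\bm u|^2\dd t\ge(2\pi)^2d(L^2-1)/6$. Bounding $\int_0^T|\bm u|^2\dd t\le\Om^2T$ then yields~\eqref{eq:lawL}.

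\textbf{Main obstacle.} The expected difficulty is the occupation step. One must check carefully that the hypothesis $\Gamma\le gI$ controls the projection onto a single function, with the right index convention in~\eqref{eq:Gamma}. The relevant Gram matrix is $\Gamma$ or its transpose, and both have the same spectrum, so either works. One must also make the integration by parts rigorous for merely absolutely continuous $U$, which holds because $H$ is bounded and measurable. The integrality of $L$ is used only so that the bathtub filling occupies exactly $L$ full slots.
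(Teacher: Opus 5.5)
Your proof is correct. Its central step differs from the paper's and is simpler.

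The paper diagonalizes $\rho=BB^\dagger=\sum_{jk}|\phi_{jk}\rangle\langle\phi_{jk}|$ into time modes $e_i$ with weights $\gamma_i\le g$ and writes the kinetic energy as $\sum_i\gamma_i\|\dot e_i\|^2$. Because Ky Fan's principle only controls the partial sums $\sum_{i\le n}\|\dot e_i\|^2$, it then needs an Abel summation over the decreasing weights to reach $\sum_i\gamma_i\lambda_i$, and only after that applies the filling argument.

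You work instead in the eigenbasis of the kinetic operator. The twisted harmonics diagonalize $-\partial_t^2$ under the boundary condition $f(T)=e^{i\theta}f(0)$. The kinetic energy $\tr(\rho K)$ therefore depends only on the diagonal entries $w_y=\langle e_y,\rho e_y\rangle$. The hypothesis $\Gamma\le gI$ bounds these directly: your Cauchy--Schwarz step is just $\langle e_y,\rho e_y\rangle\le\|\rho\|=\|\Gamma\|\le g$. This removes Ky Fan and the Abel summation entirely.

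Your bathtub step is exactly the paper's filling inequality. Order the values $y^2$ and compare every term with $\lambda_L$. You should write it out rather than cite it, noting that the infinite sums converge because $\sum_y w_y=d^2$ and the kinetic energy is finite.

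The paper's route does give more: the intermediate inequality $\sum_{jk}\|\dot\phi_{jk}\|^2\ge\sum_i\gamma_i\lambda_i$. This is equivalent to the diagonal $w$ being majorized by the spectrum $\gamma$. It uses the whole spectrum of $\Gamma$ rather than only its maximum, so it would give sharper bounds for error classes whose occupation spectrum is not flat. For the theorem as stated, only the cap $g$ enters, and your argument reaches the same bound with less machinery.
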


\paragraph*{Idea of the proof.}  Decompose the signals into orthogonal time modes with weights
$\gamma_i\le g$ summing to $d^2$; these weights are the eigenvalues of $\Gamma$.  The total kinetic
energy~\eqref{eq:energy} is the weighted kinetic energy of the modes.  It is smallest when the weight
fills the lowest harmonics first, $g$ in each of the $L$ lowest ones.  Since the harmonics $y$ are
distinct with integer differences, Lemma~\ref{lem:spacing} bounds their energy by
$g(2\pi/T)^2L(L^2-1)/12$, and Eq.~\eqref{eq:energy} with $gL=d^2$ gives the result.  Making the
``fill the lowest harmonics'' step rigorous requires Ky Fan's variational
principle~\cite{KyFan1949,Bhatia} and an Abel summation; the full proof is in
Appendix~\ref{app:lower}.

Knowledge of the error enters only through $g$.  The next proposition computes it for the classes of
interest.

\begin{proposition}[What the error class implies]\label{prop:g}
(i) If $S$ contains every traceless operator diagonal in some basis, then $\Gamma\le d\,I$, so $L=d$.
(ii) If the system is a product $\CC^{d_A}\otimes\CC^{d_B}$ and $S$ contains every traceless operator
on the first factor, then $\Gamma\le d_B^2I$, so $L=d_A^2$.  (iii) If $S$ contains every traceless
operator, then $\Gamma=I$, so $L=d^2$.
\end{proposition}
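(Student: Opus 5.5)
The plan is to read $\Gamma$ as a positive operator on $\CC^d\otimes\CC^d$, with the first factor carrying the row index $j$ of $U_{jm}$ and the second the column index $m$. The constraints~\eqref{eq:relations} together with unitarity are then statements about partial traces over the first factor. Concretely, write $\Gamma=\sum_{j,l}|j\rangle\langle l|\otimes\Gamma^{(jl)}$ with $d\times d$ blocks $\Gamma^{(jl)}_{mn}=\Gamma_{(jm),(ln)}$. Then~\eqref{eq:relations} reads $\tr_1[(X^{T}\otimes I)\Gamma]=0$ for all $X\in S$, and unitarity reads $\tr_1\Gamma=d\,I$. By linearity, for every $Y$ in the complex span $\mathcal Y$ of $S^T\cup\{I\}$ we get
\[
\tr_1[(Y\otimes I)\Gamma]=\tr(Y)\,I .
\]
Each part then reduces to identifying which operators lie in $\mathcal Y$ and turning the resulting block constraints into an operator inequality. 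Positivity of $\Gamma$ holds automatically because it is a Gram matrix.

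For (iii), $S^T$ together with $I$ spans all of $M_d(\CC)$. Taking $Y=|l\rangle\langle j|$ gives $\Gamma^{(jl)}=\delta_{jl}I$ directly, hence $\Gamma=I$ and $L=d^2$.

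For (i), let $\{|e_k\rangle\}$ be the basis in which $S$ contains all traceless diagonal operators. Then $S^T$ contains all traceless operators diagonal in $\{|\bar e_k\rangle\}$, so $\mathcal Y$ contains every projector $P_k=|\bar e_k\rangle\langle\bar e_k|$. This gives $(\langle\bar e_k|\otimes I)\Gamma(|\bar e_k\rangle\otimes I)=I$ for each $k$: in the basis $\{\bar e_k\}$ on the first factor, $\Gamma$ is a positive block matrix whose $d$ diagonal blocks are identities. Factor $\Gamma=W^\dagger W$ and split $W=[V_1,\dots,V_d]$ into column blocks, so that $V_k^\dagger V_k=I$. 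Then
\[
\|\Gamma\|=\|WW^\dagger\|=\Big\|\sum_kV_kV_k^\dagger\Big\|\le\sum_k\|V_k\|^2=d ,
\]
so $\Gamma\le d\,I$ and $L=d^2/d=d$.

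For (ii), $\mathcal Y$ contains $Y_A\otimes I_B$ for every $Y_A$ on $\CC^{d_A}$. Let $\Gamma'=\tr_B\Gamma$, an operator on $\CC^{d_A}\otimes\CC^d$. Since $\tr(Y_A\otimes I_B)=d_B\tr Y_A$, the constraint becomes $\tr_A[(Y_A\otimes I)\Gamma']=d_B\tr(Y_A)I$ for all $Y_A$. The argument of (iii) then forces $\Gamma'=d_B\,I$. It remains to use the standard operator inequality $\rho\le d_B\,I_B\otimes\tr_B\rho$ for positive $\rho$ on $\CC^{d_B}\otimes\mathcal K$. This follows from the same block argument: write $\rho$ in the basis of $B$ as a $d_B\times d_B$ positive block matrix. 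It is at most $d_B$ times its block-diagonal pinching, and that pinching is dominated by $I_B\otimes\sum_b\rho^{(bb)}$. Applied with $\mathcal K=\CC^{d_A}\otimes\CC^d$, this gives $\Gamma\le d_B\cdot d_B I=d_B^2I$, hence $L=d^2/d_B^2=d_A^2$.

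The main obstacle is bookkeeping rather than depth. Two things need care: the transpose and complex conjugation, since $S$ enters as $X^T$ and the relevant basis becomes $\{\bar e_k\}$ rather than $\{e_k\}$, and the correct identification of which tensor factor the partial trace acts on. Once the reformulation $\tr_1[(Y\otimes I)\Gamma]=\tr(Y)I$ is in place, the block-matrix norm bound (sum of $d$ or $d_B$ isometric pieces) does all the remaining work. Part (i) is really the special case of the pinching inequality in which only the diagonal of the first factor is constrained.
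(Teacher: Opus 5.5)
Your proof is correct and follows the paper's route. In both, the relations plus unitarity force the diagonal blocks of $\Gamma$ to be identities for (i), and force $\tr_b\Gamma=d_B I$ for (ii). A sum-of-isometries bound then gives (i), and the operator inequality $\Gamma\le d_B\,(\tr_b\Gamma)\otimes I$ gives (ii). The only differences are minor: you prove that inequality with the pinching argument instead of the paper's Schmidt-decomposition/Cauchy--Schwarz lemma, and you track the transpose explicitly through the basis $\{\bar e_k\}$, whereas the paper changes basis at the start.
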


The proof inserts suitable errors into Eq.~\eqref{eq:relations} and uses unitarity
(Appendix~\ref{app:g}).  All three bounds are attained by the controls below, so they cannot be
improved.

\begin{corollary}[A constant control is optimal]\label{cor:optimal}
Suppose $S$ has zero diagonal in some basis and contains either every traceless operator diagonal in
some basis, or every local operator $A\otimes I$ on two subsystems of equal dimension.  Then the
minimum duration of a robust identity is
\begin{equation}
 \Om T_{\min}=2\pi\sqrt{\frac{d\,(d^2-1)}6},
 \label{eq:Tmin}
\end{equation}
attained by the constant ladder~\eqref{eq:ladder} in the zero-diagonal basis.  This covers unknown
level shifts, unknown local errors on two equal-dimensional subsystems, and a qubit error in a plane.
No time-dependent control is faster.
\end{corollary}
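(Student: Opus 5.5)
The corollary follows by sandwiching $T_{\min}$ between the lower bound of Theorem~\ref{thm:lower}, with $g$ supplied by Proposition~\ref{prop:g}, and the constant ladder of Theorem~\ref{thm:constant}.

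\emph{Lower bound.} If $S$ contains every traceless operator diagonal in some basis, Proposition~\ref{prop:g}(i) gives $\Gamma\le d\,I$ for every robust control. Then $L=d^2/d=d$ is an integer.

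If instead $S$ contains every $A\otimes I$ on $\CC^{d_A}\otimes\CC^{d_A}$, I apply Proposition~\ref{prop:g}(ii) with $d_B=d_A$. This gives $\Gamma\le d_A^2 I=d\,I$, so again $L=d_A^4/d_A^2=d_A^2=d$.

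In either case Theorem~\ref{thm:lower} yields $\Om T\ge 2\pi\sqrt{d(d^2-1)/6}$ for every control, time dependent or not, that is robust to $S$ and returns to the identity up to a phase.

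\emph{Achievability.} Since $S$ has zero diagonal in some basis $\{|e_k\rangle\}$, I take the ladder~\eqref{eq:ladder} in that basis. By the proof of Theorem~\ref{thm:constant}(i) it is robust and returns to $U(T)\propto I$. Its Frobenius norm is $\|H\|_F^2=(2\pi/T)^2\sum_k(k-\frac{d-1}2)^2=(2\pi/T)^2d(d^2-1)/12$. Choosing $T$ so that $|\bm u|^2=2\|H\|_F^2=\Om^2$ gives exactly $\Om T=2\pi\sqrt{d(d^2-1)/6}$. This is the equality case of Theorem~\ref{thm:constant}(ii) with $K=d$, and the two bounds match.

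\emph{The listed examples.} Each must be checked against the hypotheses.
\begin{itemize}
\item \emph{Level shifts.} $S$ is exactly the traceless operators diagonal in $\{|j\rangle\}$, and the Fourier basis is zero-diagonal for $S$.
\item \emph{Local errors.} With $d_A=d_B$, $S\supseteq\{A\otimes I\}$, and the maximally entangled basis is zero-diagonal for $S$.
\item \emph{Qubit plane.} Here $d=2$, and a plane containing a direction $\bm v$ makes $S$ contain $\bm v\cdot\bm\sigma$. This operator is traceless, and every traceless operator diagonal in its eigenbasis is a multiple of it. So $S$ contains every traceless operator diagonal in that basis, and case (i) applies with $L=2$. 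The eigenbasis of $\hat n\cdot\bm\sigma$ gives the zero diagonal.
\end{itemize}

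The argument is bookkeeping once the earlier results are granted. The one point to watch is that Proposition~\ref{prop:g}(ii) is stated for $S$ containing operators on the first factor only. That suffices, since enlarging $S$ can only lower $g$, so the bound $\Gamma\le d_B^2 I$ remains valid for the larger class $A\otimes I+I\otimes B$.
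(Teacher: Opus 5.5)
Your proposal is correct and follows the paper's proof. Proposition~\ref{prop:g} (part (i), or part (ii) with $d_B=d_A$) gives $L=d$, Theorem~\ref{thm:lower} supplies the lower bound, the ladder~\eqref{eq:ladder} in the zero-diagonal basis attains it, and the qubit plane reduces to case (i) through the level-shift class of an in-plane Pauli operator. Your remark about enlarging $S$ is valid but not needed, since the corollary's hypothesis already puts you directly in Proposition~\ref{prop:g}(ii).
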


\begin{proof}
Proposition~\ref{prop:g} gives $L=d$, so Theorem~\ref{thm:lower} gives Eq.~\eqref{eq:Tmin} as a lower
bound.  The ladder~\eqref{eq:ladder} is robust by Theorem~\ref{thm:constant} and has
$|\bm u|T=2\pi\sqrt{d(d^2-1)/6}$, so it attains the bound.  A plane of qubit errors contains the
level-shift class of the eigenbasis of any in-plane Pauli operator, so (i) applies with $d=2$.
\end{proof}

The corollary shows that a constant Hamiltonian \emph{attains} the global minimum time; it does not
exclude time-dependent controls of the same duration.

\paragraph*{The price of ignorance.}  For complete ignorance, Proposition~\ref{prop:g}(iii) gives
$L=d^2$ and the bound
\begin{equation}
 \Om T\ge2\pi\sqrt{\frac{d\,(d^4-1)}6}.
 \label{eq:Tuniv}
\end{equation}
No constant control can reach it, since no zero-diagonal basis exists.  It is nevertheless attained in
every dimension by a time-dependent path.  Let $D=\sum_j(j-\frac{d-1}2)|j\rangle\langle j|$, let $\mathcal F$
be the discrete Fourier transform, $B=\mathcal FD\mathcal F^\dagger$ and $\nu=2\pi/T$.

\begin{proposition}[Complete ignorance]\label{prop:universal}
The path $U_*(t)=e^{-i\nu tB}e^{-id\nu tD}$, generated by the smooth Hamiltonian
$H_*(t)=\nu[B+d\,e^{-i\nu tB}De^{i\nu tB}]$, cancels every traceless error to first order, ends at
$U_*(T)=(-1)^{d-1}I$, and has constant amplitude with $|\bm u_*|T=2\pi\sqrt{d(d^4-1)/6}$.  It therefore
attains the bound~\eqref{eq:Tuniv}.
\end{proposition}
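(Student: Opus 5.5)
The plan is to check the four claims of Proposition~\ref{prop:universal} directly from the explicit form of $U_*$, using that $\mathcal F$ intertwines $D$ with $B$ and that the cyclic shift is diagonalized by $\mathcal F$.

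\emph{Hamiltonian and amplitude.} Differentiating $U_*(t)=e^{-i\nu tB}e^{-id\nu tD}$ gives $i\dot U_*=\nu\big[B+d\,e^{-i\nu tB}De^{i\nu tB}\big]U_*$, which is the stated $H_*(t)$. Its Frobenius norm is unchanged under conjugation by $e^{i\nu tB}$, so $\|H_*\|_F^2=\nu^2\|B+dD\|_F^2$ is constant. Expanding, $\|B\|_F^2=\|D\|_F^2=d(d^2-1)/12$, and $\tr(BD)=\sum_{j,k}|\mathcal F_{jk}|^2(j-\tfrac{d-1}2)(k-\tfrac{d-1}2)=\frac1d(\tr D)^2=0$, because $|\mathcal F_{jk}|^2=1/d$. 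Hence
\begin{equation*}
\|B+dD\|_F^2=(1+d^2)\,\frac{d(d^2-1)}{12},
\end{equation*}
and $|\bm u_*|^2T^2=2\nu^2T^2\|B+dD\|_F^2=(2\pi)^2d(d^4-1)/6$, as claimed.

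\emph{Endpoint.} At $t=T$ we have $\nu T=2\pi$. Then $e^{-2\pi idD}$ has entries $e^{-2\pi id(j-\frac{d-1}2)}=e^{i\pi d(d-1)}=1$, since $d(d-1)$ is even. Also $e^{-2\pi iB}=\mathcal Fe^{-2\pi iD}\mathcal F^\dagger$, and $e^{-2\pi i(j-\frac{d-1}2)}=e^{i\pi(d-1)}=(-1)^{d-1}$. So $U_*(T)=(-1)^{d-1}I$.

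\emph{Robustness.} This is the main step. The condition~\eqref{eq:robust} must hold for all traceless $X$, which is the same as requiring $\int_0^TU_*^\dagger XU_*\dd t=\frac{T}{d}\tr(X)\,I$ for every $X$. I would check this on the Weyl basis $X=\mathsf X^p\mathsf Z^q$, with $(p,q)\ne(0,0)$, using conventions in which $\mathsf Z$ is diagonal in $\{|j\rangle\}$ and $\mathsf X=\mathcal F\mathsf Z\mathcal F^\dagger$ up to orientation.
\begin{itemize}
\item Conjugation by $e^{-i\nu tB}$ fixes $\mathsf X$, since $\mathsf X$ is diagonal in the Fourier basis, and multiplies $\mathsf Z$ by a phase $e^{\pm i\nu t}$. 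This holds because $B$ generates the shift of the conjugate variable; the wrap-around term is harmless since $e^{\pm i\nu t d}$ enters only through the integer structure.
\item Conjugation by $e^{-id\nu tD}$ fixes $\mathsf Z$ and multiplies $\mathsf X$ by $e^{\pm id\nu t}$, with the wrap-around phase $e^{\pm i d\nu t\cdot d}$ also trivial over integer periods.
\end{itemize}

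I expect the wrap-around terms to be the real obstacle: $B$ is not an exact generator of $\mathsf Z$-translation. Rather than lean on that picture, I would compute matrix elements directly. Write
\begin{equation*}
U_*^\dagger XU_*=e^{id\nu tD}\,\big(e^{i\nu tB}Xe^{-i\nu tB}\big)\,e^{-id\nu tD}
\end{equation*}
and expand in the product basis $|f_a\rangle\langle f_b|$ inside and $|j\rangle\langle k|$ outside. The entry $(j,k)$ then becomes a sum over $a,b$ of terms $\mathcal F$-weighted by $e^{i\nu t[(a-b)+d(j-k)]}$. The exponent $(a-b)+d(j-k)$, with $a,b,j,k\in\{0,\dots,d-1\}$, vanishes only when $a=b$ and $j=k$, because $|a-b|<d$. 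Every other term integrates to zero over $[0,T]$.

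The surviving part is therefore $\delta_{jk}\sum_a|\langle f_a|j\rangle|^2\langle f_a|X|f_a\rangle=\delta_{jk}\tr X/d$. This is the clean argument, and it is exactly the mixed-radix ladder $a+dj$ whose $d^2$ distinct integer frequencies realize $L=d^2$ in Lemma~\ref{lem:spacing}.

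Combining the amplitude computation with the bound~\eqref{eq:Tuniv} then shows that the bound is attained.
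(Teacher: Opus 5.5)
Your proof is correct and follows essentially the paper's route: the same generator, the same amplitude computation (conjugation invariance plus $\tr(BD)=0$), the same endpoint check, and robustness from the fact that the frequencies $(a-b)+d(j-k)$ vanish only for $a=b$, $j=k$, which is the paper's distinct-harmonics ($k+dj$) argument for $V=\mathcal F^\dagger U_*$ written directly for the entries of $U_*^\dagger XU_*$. Drop the Weyl-basis bullets, since both are false for generic $t$: the wrap-around component of $\mathsf Z$ (resp.\ $\mathsf X$) acquires a different phase from the other components, so the operator is rescaled by a single phase only when $d\nu t\in2\pi\ZZ$ (resp.\ $d^2\nu t\in2\pi\ZZ$); your direct matrix-element computation carries the proof on its own.
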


The key point is that the $d^2$ matrix elements of $U_*$ oscillate at $d^2$ distinct consecutive
harmonics, each with equal weight, which is exactly the optimal filling of Theorem~\ref{thm:lower}
(Appendix~\ref{app:universal}).  This path is, up to conventions and the removal of a trace term, the
Heisenberg--Weyl continuous design of ~\cite{Yang2026}; the new statement is that it is globally
time optimal under the bound~\eqref{eq:bound}.

Since both minimum times are exact, the ratio
\begin{equation}
 \frac{T^{\min}_{\rm unknown}}{T^{\min}_{\rm class}}=\sqrt{\frac{d^4-1}{d^2-1}}=\sqrt{d^2+1}
 \label{eq:price}
\end{equation}
is exactly the time saved by knowing the error class [Fig.~\ref{fig:ladder}(b)].

\paragraph*{Relation to earlier bounds.}  The integer $L$ coincides with known lower bounds on the
number of distinct unitaries in a robust schedule~\cite{ZengDeng2025}: $d$ for a level-shift class,
and $q^2$ when the class contains all operators on a $q$-dimensional subsystem.  For complete
ignorance, $d^2$ is the minimal size of a unitary $1$-design~\cite{GrossAudenaertEisert2007}.
Theorem~\ref{thm:lower} turns these counts into exact times.  Under an operator-norm bound
$\|H\|\le u_{\max}$ it implies $u_{\max}T\ge\pi\sqrt{(L^2-1)/3}$, which for a single known error
($L=2$) reproduces the bound $\pi/u_{\max}$ of ~\cite{ZengDeng2025}.

The Fourier-filling argument itself was used by King and Mixon to bound the length of spherical design
curves~\cite{KingMixon2026,EhlerGrochenig2023}.  The new ingredient here is the occupation bound, which
adapts the argument to error classes.

\begin{table}[b]
\caption{Fastest robust identity under $|\bm u|\le\Om$.  All times are exact minima.  The first four
rows are attained by constant Hamiltonians; the last row requires time-dependent control.}
\label{tab:results}
\begin{ruledtabular}
\begin{tabular}{lcl}
error class & $L$ & $\Om T_{\min}$\\
\hline
qubit, error in a plane & $2$ & $2\pi$\\
qudit, unknown level shifts & $d$ & $2\pi\sqrt{d(d^2-1)/6}$\\
two $d_A$-level systems, local & $d_A^2$ & $2\pi\sqrt{d_A^2(d_A^4-1)/6}$\\
two qubits, local $+\,ZZ$ & $4$ & $2\pi\sqrt{10}$\\
\hline
completely unknown & $d^2$ & $2\pi\sqrt{d(d^4-1)/6}$
\end{tabular}
\end{ruledtabular}
\end{table}

\section{The optimal controls}\label{sec:controls}

\paragraph*{Qubit.}  An error known to lie in a plane, such as a transverse field of unknown direction,
is cancelled by a single $2\pi$ rotation about the normal of the plane, in $\Om T=2\pi$.  This is the
same time as for a single known error.  A third unknown direction removes the zero-diagonal basis,
forces time-dependent control, and raises the minimum time by a factor $\sqrt5$.

\paragraph*{Qudit with unknown level shifts.}  Unknown level shifts are the dominant static error of
many multilevel systems, from Zeeman and Stark shifts in trapped ions to frequency drifts in
superconducting circuits.  In the computational basis, the optimal control~\eqref{eq:HD} has no
diagonal elements.  For a qutrit,
using $\langle j|f_k\rangle=e^{2\pi ijk/3}/\sqrt3$,
\begin{equation}
 H_{\rm D}=\frac{2\pi}{\sqrt3\,T}\begin{pmatrix}0&e^{-i\pi/6}&e^{i\pi/6}\\
 e^{i\pi/6}&0&e^{-i\pi/6}\\ e^{-i\pi/6}&e^{i\pi/6}&0\end{pmatrix}.
 \label{eq:HDqutrit}
\end{equation}
The individual phases depend on how the basis states are chosen, but the phase around the loop does
not: $\arg(H_{01}H_{12}H_{20})=-\pi/2$, and $+\pi/2$ for the opposite orientation.  This loop phase
singles out a direction of circulation, which is the sense in which the drive is \emph{chiral}.  At
$t=T/3$ and $2T/3$ the evolution equals a cyclic shift of the levels, $|j\rangle\to|j-1\rangle$, up to a
phase, and at $t=T$ the qutrit is back at the identity; the motion is coherent and reversed by
$H\to-H$.  The chiral drive cancels any level shifts in $\Om T=4\pi$.  It is the continuous,
constant-amplitude counterpart of the shift-based decoupling sequences used against qudit
dephasing~\cite{Tripathi2025}.  It requires a simultaneous, phase-controlled drive of all three
transitions, as demonstrated with an NV spin~\cite{Barfuss2018,Kolbl2019}.
Cancelling arbitrary qutrit errors instead would require a time-dependent control $\sqrt{10}$ times
longer.

\paragraph*{Two qubits with unknown local fields.}  Local fields of unknown size and orientation on two
qubits are cancelled optimally by the constant coupling
\begin{equation}
 H_{\rm ex}=J\big(Z\otimes Z+\tfrac12X\otimes X\big),\qquad T=2\pi/J .
 \label{eq:Hex}
\end{equation}
This is an anisotropic two-body coupling, neither the isotropic Heisenberg exchange $XX+YY+ZZ$ nor the
$XY$ exchange $XX+YY$; we call it the $XX$--$ZZ$ coupling.  Its eigenstates are the four Bell states,
with equally spaced energies $J\{\frac32,\frac12,-\frac12,-\frac32\}$ for $\Phi^+,\Phi^-,\Psi^+,\Psi^-$.
Every local field connects only different Bell states, so it averages to zero after one period, at
$\Om T=2\pi\sqrt{10}$, and no single-qubit drive is needed.  Up to local rotations and relabelling of
axes, this is the only coupling $aXX+bYY+cZZ$ with equally spaced Bell energies
(Appendix~\ref{app:twoqubit}).  The isotropic Heisenberg coupling does not work, because its triplet is
degenerate and local errors act inside it.

The coupling protects more than the local errors: it also cancels the two-body errors
$XY,XZ,YX,YZ,ZX,ZY$, which have zero diagonal in the Bell basis.  It does not cancel $XX$, $YY$ or $ZZ$,
which are diagonal there; in particular, unknown $ZZ$ crosstalk is not protected.  This does not mean
that $ZZ$ crosstalk costs more time.  A Hadamard on the second qubit rotates the coupling into
\begin{equation}
 H'=J\big(Z\otimes X+\tfrac12X\otimes Z\big),
 \label{eq:Hrot}
\end{equation}
whose eigenstates are still maximally entangled and in which $ZZ$ has zero diagonal.  $H'$ cancels all
local errors and $ZZ$ in the same optimal time $2\pi\sqrt{10}/\Om$, since adding $ZZ$ keeps the local
errors in the class and hence $L=4$.  Enlarging the error class has not increased the minimum time.
Time dependence becomes necessary only when no common zero-diagonal basis remains, for instance for
fully arbitrary two-qubit errors ($L=16$).

Single-qubit drives cannot match the coupling.  With $H=H_1\otimes I+I\otimes H_2$, each qubit must
cancel every error acting on it, which is complete ignorance for that qubit.  Theorem~\ref{thm:lower}
then gives $\Om T\ge2\pi\sqrt{20}$, and this time is attained by running the $d=2$ path of
Proposition~\ref{prop:universal} on each qubit (Appendix~\ref{app:twoqubit}).

\begin{proposition}[Entangling advantage]\label{prop:local}
At equal total drive power~\eqref{eq:bound}, the $XX$--$ZZ$ coupling cancels unknown local fields
$\sqrt2$ times faster than the fastest control built from single-qubit drives.
\end{proposition}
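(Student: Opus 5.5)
Two numbers have to be compared: the time the $XX$--$ZZ$ coupling needs under the total-power bound~\eqref{eq:bound}, and the minimum time over all controls of the form $H(t)=H_1(t)\otimes I+I\otimes H_2(t)$. Their ratio should be $\sqrt{20}/\sqrt{10}=\sqrt2$.

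\emph{Coupling side.} The class of local errors on two qubits has $d_A=2$, $d=4$. Corollary~\ref{cor:optimal} already gives $\Om T_{\min}=2\pi\sqrt{4\cdot15/6}=2\pi\sqrt{10}$, attained by the Bell-basis ladder. I only need to check that $H_{\rm ex}$ in Eq.~\eqref{eq:Hex} is such a ladder with the right normalization. Its Bell energies are $J\{\tfrac32,\tfrac12,-\tfrac12,-\tfrac32\}$, so with $T=2\pi/J$ the spacing is $2\pi/T$. The amplitude is $|\bm u|^2=2\|H_{\rm ex}\|_F^2=2J^2\cdot 4\,(1+\tfrac14)=10J^2$, which gives $|\bm u|T=2\pi\sqrt{10}$.

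\emph{Local side, lower bound.} For a product Hamiltonian the propagator factorizes as $U=U_1\otimes U_2$. I will reduce the problem to each qubit separately.
\begin{itemize}
\item Robustness to $A\otimes I$ gives $\int_0^T U_1^\dagger AU_1\,\mathrm dt\otimes I=0$ for every traceless $A$. So $U_1$ is robust to complete ignorance on qubit~1, and likewise $U_2$ on qubit~2.
\item Closure: $U(T)\propto I$ forces $U_1(T)\propto I$ and $U_2(T)\propto I$ separately, since a tensor product is proportional to the identity only if each factor is.
\item Power splits: $\|H\|_F^2=2\|H_1\|_F^2+2\|H_2\|_F^2$ because both factors are traceless, so $|\bm u|^2=|\bm u_1|^2+|\bm u_2|^2$ with the qubit Rabi vectors counted on the two-qubit normalization. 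I will verify this factor of $2$ carefully against the definition of $\bm u$.
\end{itemize}

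Now apply the stronger energy form of Theorem~\ref{thm:lower} to each qubit, with $L=d^2=4$ from Proposition~\ref{prop:g}(iii) at $d=2$. This gives $T\int|\bm u_i|^2\ge(2\pi)^2\cdot2\cdot15/6=(2\pi)^2\cdot5$ on the qubit's own normalization. The step I expect to be the main obstacle is the factor bookkeeping between the qubit normalization and the two-qubit normalization of $|\bm u|$. The paper's claim $\Om T\ge 2\pi\sqrt{20}$ suggests each qubit contributes $10$ after rescaling. I would check this by writing $H_1\otimes I$ in two-qubit Pauli products, where each coefficient doubles in Frobenius weight. Summing the two qubits then gives $\Om^2T^2\ge T\int|\bm u|^2\ge(2\pi)^2\cdot20$.

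\emph{Local side, attainability.} Run the $d=2$ path of Proposition~\ref{prop:universal} simultaneously on both qubits with the same $T$. Each qubit's path is robust and closes, so the product is robust to all local errors. Each has constant amplitude, so the total amplitude is constant and saturates the bound at $2\pi\sqrt{20}$. Therefore the best product control needs $\Om T=2\pi\sqrt{20}$, the coupling needs $2\pi\sqrt{10}$, and the ratio is $\sqrt2$.
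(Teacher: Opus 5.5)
Your proposal is correct and follows the paper's own argument (Appendix~\ref{app:twoqubit}). You factorize $U=U_1\otimes U_2$, reduce to complete ignorance on each qubit, apply the energy form of Theorem~\ref{thm:lower} with $d=2$, $L=4$, use $|\bm u|^2=2(|\bm h_1|^2+|\bm h_2|^2)$ to get $\Om T\ge2\pi\sqrt{20}$, attain this with the $d=2$ path of Proposition~\ref{prop:universal} on each qubit, and compare with $2\pi\sqrt{10}$ for $H_{\rm ex}$. Your factor-of-two bookkeeping is right. Your check that $U_1\otimes U_2\propto I$ forces each factor to close is a detail the paper leaves implicit, but it is needed to apply Theorem~\ref{thm:lower} to each qubit.
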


The factor $\sqrt2$ refers to the bound~\eqref{eq:bound}, which assigns the same cost to a two-body
coupling as to a local field of equal Frobenius norm.  It is not universal under other hardware
constraints.  The coupling~\eqref{eq:Hex} is assumed to be available directly; if a platform must
synthesize it from other interactions, that overhead is not included.
In contrast to energy-penalty schemes, no information is encoded and no large gap is needed.  In
contrast to protection of particular entangled states by exchange~\cite{De2011}, every two-qubit state
is protected.

\begin{figure*}[t]
\includegraphics[width=\textwidth]{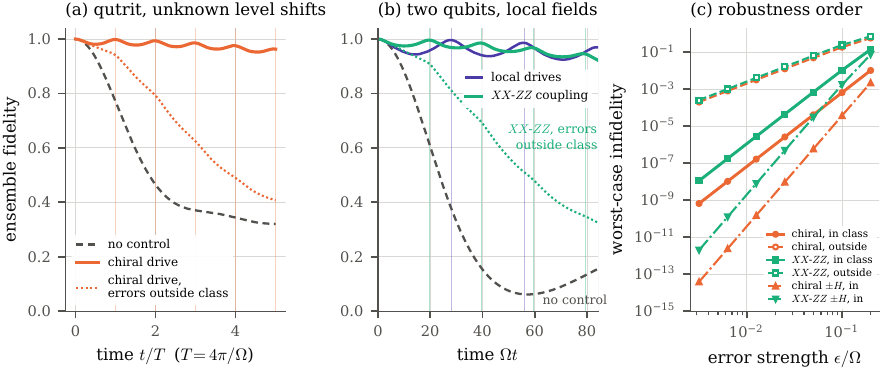}
\caption{Robust control as refocusing.  (a,b)~Ensembles of systems, each with its own random static
error from the stated class (root-mean-square strength $\epsilon=0.1\,\Om$).  The curves show the
ensemble-averaged fidelity $|\tr(U_0^\dagger U_\epsilon)/d|^2$ between the actual and ideal evolution;
vertical lines mark the control periods.  (a)~Qutrits with unknown level shifts: without control
(dashed) the ensemble dephases, while the constant chiral drive~\eqref{eq:HDqutrit} refocuses it at
every period $T=4\pi/\Om$.  Errors outside the class (dotted) are not refocused.  (b)~Two qubits with
unknown local fields: the $XX$--$ZZ$ coupling~\eqref{eq:Hex} refocuses at multiples of
$2\pi\sqrt{10}/\Om$, a factor $\sqrt2$ earlier than the fastest local drives ($2\pi\sqrt{20}/\Om$).  The
slow degradation of the peaks is the residual second-order error.  (c)~Worst-case infidelity at one
period versus error strength: it scales as $\epsilon^4$ for errors inside the class, as $\epsilon^2$
outside it, and as $\epsilon^6$ for the $\pm H$ sequence of Sec.~\ref{sec:scope}.  Simulation details
are given in Appendix~\ref{app:numerics}.}
\label{fig:revivals}
\end{figure*}

\paragraph*{Numerical checks.}  We verified every construction by exact numerical propagation.  Each
satisfies the robustness condition~\eqref{eq:robust} to $10^{-14}$ and reaches the predicted $|\bm u|T$
to machine precision, for level shifts in $d=2$ to $6$, for local errors with $d_A=2,3$, for local
errors plus $ZZ$, and for the complete-ignorance path in $d=2$ to $5$.  Fitting the infidelities of
Fig.~\ref{fig:revivals}(c) gives exponents $4.00$ inside the class, $2.00$ outside, and $5.99$ for the
$\pm H$ sequence (Table~\ref{tab:slopes}).  As an independent test of the lower bound, we minimized $T$
over general piecewise-constant controls from random initial pulses, for qutrit level shifts and
two-qubit local errors.  No robust control was shorter than Eq.~\eqref{eq:Tmin}; the optimizer settled
in local minima $1.3$ to $1.7$ times longer, so these runs probe the bound but do not test its
tightness, which the explicit constructions establish.

\section{Scope and open questions}\label{sec:scope}

\paragraph*{Drive normalization.}  The bound~\eqref{eq:bound} limits the total drive power over all
couplings.  It is natural for multi-tone drives, but it is a modelling choice.  Other bounds, for
example on each coupling separately or on the operator norm, change the numerical constants, including
the $\sqrt{d^2+1}$ and $\sqrt2$ factors.  They do not change the zero-diagonal criterion, the averaging
mechanism, or the fact that the controls above are robust.

\paragraph*{Higher orders.}  Our controls cancel static errors to first order.  The surviving
second-order term accumulates coherently from one period to the next; across an ensemble this shows up
as the slow decay of the peaks in Fig.~\ref{fig:revivals}.  It can be removed by applying $H$ for a time
$T$ and then $-H$ for another $T$.  This retraces the path, makes the toggling-frame error symmetric in
time, and cancels the second-order term, so the infidelity drops to order $\epsilon^6$
(Proposition~\ref{prop:pm} in Appendix~\ref{app:magnus}).  We do not claim that this doubled sequence is
the fastest second-order robust control.

\paragraph*{Time-dependent noise.}  All results assume static or slow errors.  For noise that fluctuates
in time, the relevant quantity is the sensitivity of the control to each noise
frequency~\cite{Cywinski2008,Green2013,PazSilva2014}.  For the constant controls derived here, this
sensitivity vanishes at zero frequency and grows only quadratically at low frequency, so noise that is
slow compared with $T$ is still suppressed; the $\pm H$ sequence makes the growth quartic.  Since the
total sensitivity integrated over all frequencies is independent of the control~\cite{Cywinski2008},
noise much faster than the drive cannot be suppressed by any bounded control.

\paragraph*{Open questions.}  We determined the exact minimum time for classes that contain a full
level-shift class or a full subsystem class and have a common zero-diagonal basis.  For smaller
classes, such as a single diagonal error of unknown size on a qudit, the constant control gives an upper
bound and the optimal time is open.  The same holds for classes that need time-dependent control but are
smaller than complete ignorance, such as local errors on subsystems of different dimensions.  A complete
characterization of which error classes admit a common zero-diagonal basis would extend the criterion to
all classes; it connects robust control to open problems in simultaneous
hollowization~\cite{DammFassbender2020,Nicholus2026}.  Finally, our results concern the identity
operation; robust gates other than the identity remain to be treated.

\section{Conclusion}

Knowing where an error lies changes the form of an optimal robust control, not only its duration.  If
the error class has zero diagonal in some basis, an equally spaced energy ladder in that basis cancels
every error in the class to first order after one period.  For unknown level shifts and for unknown
local errors, no time-dependent control does better.  Only when the class is too large to admit such a
basis is time dependence needed, and complete ignorance then costs exactly $\sqrt{d^2+1}$ times more.

The practical recipe is short.  Find a basis in which the possible errors have no diagonal, and drive an
equally spaced energy ladder in that basis for one period.  For a qudit with drifting level shifts this
is a chiral drive of all transitions.  For two qubits with local field noise it is a constant
anisotropic $XX$--$ZZ$ coupling, which outperforms every single-qubit control at equal drive power.

\section*{Author contributions}
LL.B.G.\ conceived the project, defined its scope, and takes full responsibility for the paper. The work was carried out with the assistance of the language model Claude Opus~5.5, which helped develop the results, proposed candidate proofs that were then checked analytically and numerically, wrote and ran the
numerical code, and produced first drafts of the manuscript.

\section*{Acknowledgments}

This work has been supported by European Union’s HORIZON–MSCA-2023-DN-JD programme under the Horizon Europe (HORIZON) Marie Sklodowska-Curie Actions, grant agreement 101120296 (TENORS).

\section*{Data availability}

Code and data are available at the repository https://github.com/llorebaga/RobustSubspace.

\appendix

\section{First-order robustness and higher orders}\label{app:magnus}

\paragraph*{Magnus expansion.}  Let $\Xt(t)=U^\dagger(t)XU(t)$ be the toggling-frame error and
$V(t)=U^\dagger(t)U_\epsilon(t)$ the relative propagator.  It obeys $\dot V=-i\epsilon\Xt V$, and its
Magnus expansion is $V(T)=\exp[\epsilon M_1+\epsilon^2M_2+O(\epsilon^3)]$ with
\begin{align*}
 M_1&=-i\int_0^T\Xt\dd t,\\
 M_2&=-\frac12\int_0^T\!\dd t_1\!\int_0^{t_1}\!\dd t_2\,[\Xt(t_1),\Xt(t_2)] .
\end{align*}
Robustness means $M_1=0$; it does not imply $M_2=0$.

\paragraph*{Fidelity.}  Write $V(T)=e^{-iG}$ with $G$ Hermitian.  All Magnus terms are traceless (the
first because $X$ is, the others because they are commutators), so expanding the exponential gives
$F=|\tr V/d|^2=1-\|G\|_F^2/d+O(\|G\|^4)$.  Without robustness, $1-F=\epsilon^2\|M_1\|_F^2/d+O(\epsilon^3)$.
With robustness, $1-F=\epsilon^4\|M_2\|_F^2/d+O(\epsilon^5)$.

\paragraph*{Repeated periods.}  For a periodic control with $U(T)=e^{i\theta}I$ and a static error,
$V(nT)=V(T)^n=\exp[n\epsilon^2M_2+O(n\epsilon^3)]$.  The second-order error accumulates coherently, and
for a single error the fidelity at the peaks is quasi-periodic rather than monotonically decreasing.
Averaging over an ensemble of errors dephases these phases and produces the slow decay of the peaks in
Fig.~\ref{fig:revivals}.

\begin{proposition}[$\pm H$ sequence]\label{prop:pm}
Let $H$ be a constant Hamiltonian, robust to $S$, with $e^{-iHT}\propto I$.  Apply $H$ on $[0,T]$ and
$-H$ on $(T,2T]$.  For every static $X\in S$, the sequence returns exactly to the identity and
$M_1=M_2=0$ on $[0,2T]$, so $1-F=O(\epsilon^6)$.
\end{proposition}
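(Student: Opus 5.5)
The plan is to work directly with the toggling-frame error of the full sequence and compute $M_1$ and $M_2$ on $[0,2T]$ using the time-reversal structure. The propagator is $U(t)=e^{-iHt}$ on $[0,T]$, and $U(t)=e^{iH(t-T)}U(T)$ on $(T,2T]$. Since $U(T)=e^{i\theta}I$ commutes with everything, on the second half $U(t)=e^{i\theta}e^{-iH(2T-t)}$. So $U(2T)=e^{i\theta}I$, and exact return (up to phase) is immediate. More importantly, the toggling-frame error $\Xt(t)=U^\dagger(t)XU(t)$ satisfies the reflection symmetry
\begin{equation*}
\Xt(2T-t)=\Xt(t)\qquad (0\le t\le T),
\end{equation*}
because the global phase cancels in the conjugation.

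\textbf{First order.} $M_1$ on $[0,2T]$ is $-i\int_0^{2T}\Xt\dd t=-2i\int_0^T\Xt\dd t$. This vanishes by robustness of $H$ to $S$ (condition~\eqref{eq:robust}), which holds for constant robust controls by Theorem~\ref{thm:constant}.

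\textbf{Second order.} I will use the general fact that the second Magnus term vanishes for any toggling-frame generator that is symmetric about the midpoint, $A(2T-t)=A(t)$, once the first-order term vanishes. Concretely, write
\begin{equation*}
2M_2=-\iint_{t_2<t_1}[\Xt(t_1),\Xt(t_2)]\,\dd t_2\,\dd t_1
\end{equation*}
over the triangle in $[0,2T]^2$, and substitute $s_i=2T-t_i$. The map $(t_1,t_2)\mapsto(2T-t_2,2T-t_1)$ sends the triangle to itself. Under it, and using the symmetry, the integrand $[\Xt(t_1),\Xt(t_2)]$ becomes $[\Xt(t_2'),\Xt(t_1')]$, which is minus itself. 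Hence $M_2=-M_2=0$.

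This is the step to check most carefully: the reflection must be applied as $(t_1,t_2)\to(2T-t_2,2T-t_1)$, not as the naive pointwise swap, so that the ordering $t_2<t_1$ is preserved. With this choice the argument does not even need $M_1=0$.

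\textbf{Fidelity.} With $M_1=M_2=0$, the Magnus series gives $V(2T)=\exp[\epsilon^3M_3+O(\epsilon^4)]$, where $M_3$ is traceless and $V(2T)$ is the relative propagator $U^\dagger(2T)U_\epsilon(2T)$. Write $V(2T)=e^{-iG}$ with $G=O(\epsilon^3)$ Hermitian and traceless. The fidelity expansion $F=1-\|G\|_F^2/d+O(\|G\|^4)$ from this appendix then gives $1-F=O(\epsilon^6)$.

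One minor point to note: the fidelity must be taken against the ideal $U(2T)\propto I$, and the global phase $e^{i\theta}$ drops out of $|\tr(\cdot)|^2$.
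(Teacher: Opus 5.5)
Your argument is the paper's proof: the reflection symmetry $\Xt(2T-t)=\Xt(t)$ makes $M_1$ twice the first-period integral, which vanishes by robustness. The substitution $(t_1,t_2)\mapsto(2T-t_2,2T-t_1)$ maps the triangle onto itself and flips the commutator, so $M_2=-M_2=0$. As you note, this step does not need $M_1=0$. The fidelity conclusion then follows from the appendix expansion, as you say. First-period robustness is simply a hypothesis here, so you do not need to invoke Theorem~\ref{thm:constant} for it.

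The one thing to correct is the endpoint. On $(T,2T]$ one has
\begin{equation*}
U(t)=e^{iH(t-T)}e^{-iHT}=e^{-iH(2T-t)}=U(2T-t),
\end{equation*}
with no extra phase. Hence $U(2T)=e^{iHT}e^{-iHT}=I$ exactly, which is the \emph{returns exactly to the identity} part of the statement. Your formula $U(t)=e^{i\theta}e^{-iH(2T-t)}$ counts the phase $e^{i\theta}$ twice. It would give $U(2T)=e^{i\theta}I$, which is false whenever $\theta\neq0$, so as written your proof establishes only return up to a phase. Since a global phase cancels both in $\Xt$ and in $|\tr(\cdot)|^2$, nothing else in your argument is affected.
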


\begin{proof}
For $t\in[T,2T]$, $U(t)=e^{iH(t-T)}e^{-iHT}=e^{-iH(2T-t)}=U(2T-t)$, so $U(2T)=I$ and
$\Xt(2T-t)=\Xt(t)$.  Hence $M_1$ on $[0,2T]$ is twice its first-period value, zero.  In $M_2$, substitute
$(t_1,t_2)=(2T-s_2,2T-s_1)$.  This maps the triangle $0\le t_2\le t_1\le2T$ onto itself with unit
Jacobian, while the integrand becomes $[\Xt(s_2),\Xt(s_1)]=-[\Xt(s_1),\Xt(s_2)]$.  Hence $M_2=-M_2=0$.
\end{proof}

The proof assumes a static error, an exact sign reversal of the whole control, and a robust first
period.  It is an instance of the general property of time-symmetric cycles in average Hamiltonian
theory~\cite{HaeberlenWaugh1968,Mansfield1971}.

\section{Proof of Theorem~\ref{thm:lower}}\label{app:lower}

\paragraph*{Regularity.}  For bounded measurable $\bm u$, the propagator is absolutely continuous with
bounded derivative, so every signal $\phi_{jk}$ lies in the Sobolev space $H^1([0,T])$ and obeys
$\phi_{jk}(T)=e^{i\theta}\phi_{jk}(0)$.  We take $H$ traceless; a trace part only adds a global phase and
increases $\|H\|_F$.

\paragraph*{Time modes.}  Let $B:\CC^{d^2}\to L^2([0,T])$ map the basis vector $(jk)$ to $\phi_{jk}$.  Then
$\Gamma=B^\dagger B$, while $\rho=BB^\dagger=\sum_{jk}|\phi_{jk}\rangle\langle\phi_{jk}|$ is an operator
on functions of time with the same nonzero eigenvalues.  Write
$\rho=\sum_{i=1}^r\gamma_i|e_i\rangle\langle e_i|$ with orthonormal modes $e_i$ and
$g\ge\gamma_1\ge\gamma_2\ge\dots\ge\gamma_r>0$, $\sum_i\gamma_i=d^2$.  Each $e_i$ is a combination of
signals, so it also lies in $H^1$ and obeys the twisted boundary condition.  Expanding
$\phi_{jk}=\sum_i\langle e_i,\phi_{jk}\rangle e_i$ and using
$\sum_{jk}\langle e_i,\phi_{jk}\rangle\overline{\langle e_{i'},\phi_{jk}\rangle}=\gamma_i\delta_{ii'}$ gives
\begin{equation}
 \sum_{jk}\|\dot\phi_{jk}\|^2=\sum_i\gamma_i\|\dot e_i\|^2 .
 \label{eq:kinrho}
\end{equation}

\paragraph*{Harmonics.}  The functions $f_s(t)=T^{-1/2}e^{i(2\pi s+\theta)t/T}$, $s\in\ZZ$, form an
orthonormal basis of $L^2([0,T])$.  For $f$ in $H^1$ with $f(T)=e^{i\theta}f(0)$, integration by parts
(the boundary terms cancel) and Parseval's identity give $\|\dot f\|^2=\sum_s\omega_s^2|\langle
f_s,f\rangle|^2$ with $\omega_s=(2\pi s+\theta)/T$.  Let $\lambda_1\le\lambda_2\le\dots$ be the values
$\omega_s^2$ in increasing order.

\paragraph*{Ky Fan's principle.}  For orthonormal $e_1,\dots,e_n$ as above,
$\sum_{i\le n}\|\dot e_i\|^2\ge\lambda_1+\dots+\lambda_n$.  Indeed,
$\sum_{i\le n}\|\dot e_i\|^2=\sum_s\omega_s^2w_s$ with $w_s=\sum_{i\le n}|\langle f_s,e_i\rangle|^2$.  Each
$w_s$ lies in $[0,1]$ (it is the squared length of the projection of $f_s$ onto the span of the $e_i$),
and $\sum_sw_s=n$, so the sum is smallest when weight one sits on the $n$ smallest $\omega_s^2$.  This
bounds partial sums only; it does not give $\|\dot e_i\|^2\ge\lambda_i$ for each $i$ (take $e_1=f_1$,
$e_2=f_0$, $\theta=0$).

\paragraph*{Abel summation.}  With $a_i=\|\dot e_i\|^2$ and $\gamma_{r+1}=0$,
\begin{align*}
 \sum_i\gamma_ia_i&=\sum_{n=1}^r(\gamma_n-\gamma_{n+1})\sum_{i\le n}a_i\\
 &\ge\sum_{n=1}^r(\gamma_n-\gamma_{n+1})\sum_{i\le n}\lambda_i=\sum_i\gamma_i\lambda_i ,
\end{align*}
where the inequality uses Ky Fan for every $n$ and needs the weights to be decreasing.

\paragraph*{Filling.}  For $0\le\gamma_i\le g$ with $\sum_i\gamma_i=gL$,
\begin{equation*}
 \sum_i\gamma_i\lambda_i-g\sum_{i\le L}\lambda_i\ge
 \lambda_L\Big[\sum_{i\le L}(\gamma_i-g)+\sum_{i>L}\gamma_i\Big]=0,
\end{equation*}
since $\gamma_i-g\le0$ and $\lambda_i\le\lambda_L$ for $i\le L$, while $\lambda_i\ge\lambda_L$ for $i>L$.
So the weight fills the $L$ lowest harmonics.

\paragraph*{Conclusion.}  The $L$ lowest values are $\lambda_i=(2\pi/T)^2y_i^2$ with
$y_i\in\ZZ+\theta/2\pi$ distinct, and Lemma~\ref{lem:spacing} gives
$\sum_{i\le L}\lambda_i\ge(2\pi/T)^2L(L^2-1)/12$.  Combining with Eqs.~\eqref{eq:energy}
and~\eqref{eq:kinrho},
$\frac d{2T}\int_0^T|\bm u|^2\dd t\ge g(2\pi/T)^2L(L^2-1)/12$, and $gL=d^2$ gives
$T\int_0^T|\bm u|^2\dd t\ge(2\pi)^2d(L^2-1)/6$.  With $\int_0^T|\bm u|^2\le\Om^2T$ this is
Eq.~\eqref{eq:lawL}; equality requires $|\bm u|=\Om$ throughout and, by Lemma~\ref{lem:spacing},
$\theta=0$ for odd $L$ and $\theta=\pi$ for even $L$.

\section{Proof of Proposition~\ref{prop:g}}\label{app:g}

\paragraph*{(i) Level shifts.}  A fixed change of basis $U\mapsto W^\dagger UW$ acts on $\Gamma$ by a
unitary similarity and changes neither the kinetic energy nor the endpoint, so we may take the basis to
be $\{|j\rangle\}$.  Inserting $X=|j\rangle\langle j|-|l\rangle\langle l|$ into Eq.~\eqref{eq:relations}
shows that the $d\times d$ blocks $(G_j)_{mn}=\Gamma_{(jm),(jn)}$ are all equal.  Unitarity says they sum
to $dI$, so each equals $I$.  Split $B$ into column blocks $B_j:\CC^d\to L^2$, $B_je_m=\phi_{jm}$; then
$B_j^\dagger B_j=I$, so each $B_j$ preserves lengths.  For $p=(p_1,\dots,p_d)$ with $p_j\in\CC^d$,
\begin{equation*}
 p^\dagger\Gamma p=\Big\|\sum_jB_jp_j\Big\|^2\le\Big(\sum_j\|p_j\|\Big)^2\le d\|p\|^2,
\end{equation*}
so $\Gamma\le dI$.

\paragraph*{(ii) One subsystem.}  Split the row index of $U$, on which the error acts, as $j=(a,b)$.  For
$X=X_A\otimes I$, Eq.~\eqref{eq:relations} reads
$\sum_{a,a'}(X_A)_{aa'}(\tr_b\Gamma)_{(am),(a'n)}=0$, where
$(\tr_b\Gamma)_{(am),(a'n)}=\sum_b\Gamma_{((ab)m),((a'b)n)}$ traces out the subsystem-$B$ part of the row
index.  Since this holds for all traceless $X_A$, the $d_A\times d_A$ matrix
$[(\tr_b\Gamma)_{(am),(a'n)}]_{aa'}$ is a multiple $c_{mn}$ of the identity, and unitarity gives
$c_{mn}=d_B\delta_{mn}$.  Thus $\tr_b\Gamma=d_B I$.  The lemma below, with the factor $b$ of dimension
$d_B$, gives $\Gamma\le d_B(\tr_b\Gamma)\otimes I=d_B^2I$.

\paragraph*{(iii) Complete ignorance.}  For fixed $m,n$, Eq.~\eqref{eq:relations} for all traceless $X$
forces $\Gamma_{(jm),(ln)}=c_{mn}\delta_{jl}$, and unitarity gives $c_{mn}=\delta_{mn}$, so $\Gamma=I$.

\begin{lemma}\label{lem:schmidt}
For a positive operator $M$ on $\mathcal H_1\otimes\mathcal H_2$,
$M\le\min(\dim\mathcal H_1,\dim\mathcal H_2)\,(\tr_2M)\otimes I_2$.
\end{lemma}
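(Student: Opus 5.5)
Since both sides are linear in $M$ and the set of positive operators is the cone generated by rank-one projectors, it suffices to prove the inequality for $M=|\psi\rangle\langle\psi|$ with $\psi\in\mathcal H_1\otimes\mathcal H_2$ and then sum over a spectral decomposition of $M$. Indeed, if $|\psi_i\rangle\langle\psi_i|\le k\,(\tr_2|\psi_i\rangle\langle\psi_i|)\otimes I_2$ for each $i$ with $k=\min(\dim\mathcal H_1,\dim\mathcal H_2)$, summing with nonnegative weights gives the claim, because the partial trace is linear.

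For a pure state I would take a Schmidt decomposition $\psi=\sum_{s=1}^r\sqrt{p_s}\,|a_s\rangle|b_s\rangle$ with $r\le k$, orthonormal $\{a_s\}$, $\{b_s\}$ and $p_s>0$. Then $\tr_2|\psi\rangle\langle\psi|=\sum_sp_s|a_s\rangle\langle a_s|$. For an arbitrary test vector $\chi\in\mathcal H_1\otimes\mathcal H_2$ we must show $|\langle\psi|\chi\rangle|^2\le k\,\langle\chi|(\rho_1\otimes I)|\chi\rangle$, where $\rho_1=\tr_2|\psi\rangle\langle\psi|$.

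Define the vectors $c_s=(\langle a_s|\otimes I)\chi\in\mathcal H_2$. Then $\langle\psi|\chi\rangle=\sum_s\sqrt{p_s}\,\langle b_s|c_s\rangle$, and the right-hand side is $\langle\chi|(\rho_1\otimes I)|\chi\rangle=\sum_sp_s\|c_s\|^2$. Cauchy--Schwarz over the $r$ terms, together with $|\langle b_s|c_s\rangle|\le\|c_s\|$, gives
\[
 |\langle\psi|\chi\rangle|^2\le\Big(\sum_{s=1}^r\sqrt{p_s}\,\|c_s\|\Big)^2\le r\sum_sp_s\|c_s\|^2\le k\,\langle\chi|(\rho_1\otimes I)|\chi\rangle .
\]
This is exactly the pattern of the estimate used in part (i).

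The only delicate point is that the constant must be the Schmidt rank bound $k=\min(\dim\mathcal H_1,\dim\mathcal H_2)$ rather than $\dim\mathcal H_1\dim\mathcal H_2$. This is handled by running the Cauchy--Schwarz step over the Schmidt index, which has at most $k$ terms. There is no real obstacle beyond this. In the application, where the factor of dimension $d_B$ is traced out, one takes $k\le d_B$; this gives $\Gamma\le d_B\,(\tr_b\Gamma)\otimes I=d_B^2I$, as required.
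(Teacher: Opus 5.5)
Your proof is correct and follows the paper's argument essentially step for step: you reduce to rank-one $M$, take a Schmidt decomposition with $r\le\min(\dim\mathcal H_1,\dim\mathcal H_2)$ terms, and apply Cauchy--Schwarz over the Schmidt index. The only cosmetic difference is the order of the two inequalities: you bound $|\langle b_s|c_s\rangle|\le\|c_s\|$ before applying Cauchy--Schwarz, while the paper applies Cauchy--Schwarz first and then uses $|\langle\chi,a_i\otimes b_i\rangle|^2\le\langle\chi,(|a_i\rangle\langle a_i|\otimes I)\chi\rangle$.
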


\begin{proof}
This is a form of the Schmidt-number criterion~\cite{TerhalHorodecki2000}.  Both sides add up over a
decomposition of $M$ into rank-one terms, so it suffices to take $M=|\psi\rangle\langle\psi|$ with Schmidt
decomposition $\psi=\sum_{i=1}^r\sqrt{\mu_i}\,a_i\otimes b_i$, $r\le\min(\dim\mathcal H_1,\dim\mathcal H_2)$.
For any $\chi$, Cauchy--Schwarz gives
$|\langle\chi,\psi\rangle|^2\le r\sum_i\mu_i|\langle\chi,a_i\otimes b_i\rangle|^2
\le r\sum_i\mu_i\langle\chi,(|a_i\rangle\langle a_i|\otimes I)\chi\rangle
=r\langle\chi,(\tr_2M\otimes I)\chi\rangle$.
\end{proof}

\section{Proof of Proposition~\ref{prop:universal}}\label{app:universal}

\paragraph*{Generator.}  Differentiating $U_*$ gives $\dot U_*=-iH_*U_*$ with the stated $H_*$, which is
smooth, Hermitian and traceless ($\tr B=\tr D=0$).  It is a single continuous trajectory, not a sequence
of two pulses.

\paragraph*{Averaging.}  Write $q_j=j-\frac{d-1}2$.  Since $\mathcal F^\dagger e^{-i\nu tB}=e^{-i\nu
tD}\mathcal F^\dagger$, the matrix $V=\mathcal F^\dagger U_*$ has elements
$V_{kj}(t)=(\mathcal F^\dagger)_{kj}e^{-i\nu t(q_k+dq_j)}$ with $q_k+dq_j=k+dj-\frac{d^2-1}2$.  As $(k,j)$
runs over all pairs, $k+dj$ runs through $0,\dots,d^2-1$ exactly once, so the $d^2$ elements oscillate at
distinct harmonics of $\nu=2\pi/T$, each with weight $|(\mathcal F^\dagger)_{kj}|^2=1/d$.  Hence
$\frac1T\int_0^T\overline{V_{kj}}V_{ln}\dd t=\frac1d\delta_{kl}\delta_{jn}$.  For any $X$, with
$Y=\mathcal F^\dagger X\mathcal F$,
$\frac1T\int_0^T(U_*^\dagger XU_*)_{jn}\dd t=\sum_{k,l}Y_{kl}\frac1d\delta_{kl}\delta_{jn}=\frac{\tr X}d\delta_{jn}$,
which vanishes for traceless $X$.

\paragraph*{Endpoint.}  $e^{-2\pi iq_j}=(-1)^{d-1}$ for every $j$, so $e^{-i\nu TB}=e^{-i\nu TD}=(-1)^{d-1}I$.
Also $e^{-2\pi idq_j}=e^{i\pi d(d-1)}=1$, so $e^{-id\nu TD}=I$, and $U_*(T)=(-1)^{d-1}I$.

\paragraph*{Amplitude.}  $\tr D^2=\tr B^2=d(d^2-1)/12$ and
$\tr(BD)=\sum_{jk}|\mathcal F_{jk}|^2q_jq_k=\frac1d(\sum_jq_j)^2=0$.  Since $B$ commutes with $e^{\pm i\nu tB}$,
$\tr H_*^2=\nu^2[\tr B^2+d^2\tr D^2+2d\tr(BD)]=\nu^2d(d^4-1)/12$ at all times, so
$|\bm u_*|^2=\nu^2d(d^4-1)/6$ and $|\bm u_*|T=2\pi\sqrt{d(d^4-1)/6}$.  The same holds with the two factors
of $U_*$ in reverse order.

\section{Two-qubit details}\label{app:twoqubit}

\paragraph*{Uniqueness of the coupling.}  The energies of $aXX+bYY+cZZ$ on $\Phi^+,\Phi^-,\Psi^+,\Psi^-$ are
$a-b+c$, $-a+b+c$, $a+b-c$ and $-a-b-c$.  Inverting this linear map for each of the $24$ assignments of
$\{\pm\frac32,\pm\frac12\}J$ gives $\{|a|,|b|,|c|\}=\{1,\frac12,0\}J$ in every case.  Signs can be changed
by local unitaries, and permutations by relabelling axes.

\paragraph*{Protected errors.}  The Bell states are joint eigenstates of $XX$, $YY$ and $ZZ$.  A Pauli
operator that anticommutes with one of these has zero expectation in every Bell state.  This includes all
six local Pauli operators and $XY,XZ,YX,YZ,ZX,ZY$, so $H_{\rm ex}$ cancels this $12$-dimensional class.
For $H'=(I\otimes W)H_{\rm ex}(I\otimes W)$, with $W$ a Hadamard, the eigenstates are
$(I\otimes W)|\beta_k\rangle$ and $\langle\beta_k|(I\otimes W)ZZ(I\otimes W)|\beta_k\rangle=\langle\beta_k|ZX|\beta_k\rangle=0$.

\paragraph*{Local-only controls.}  For $H=H_1\otimes I+I\otimes H_2$ with $H_i=\frac12\bm h_i\cdot\bm\sigma$,
the propagator factorizes, $U=U_1\otimes U_2$, and robustness to $X_A\otimes I$ requires
$\int_0^TU_1^\dagger X_AU_1\dd t=0$ for every single-qubit $X_A$.  Proposition~\ref{prop:g}(iii) and the
energy bound of Theorem~\ref{thm:lower} with $d=2$, $L=4$ give $T\int_0^T|\bm h_i|^2\dd t\ge20\pi^2$ for
each qubit.  Since $|\bm u|^2=2(|\bm h_1|^2+|\bm h_2|^2)$, we get $\Om^2T^2\ge80\pi^2$, i.e.,
$\Om T\ge2\pi\sqrt{20}$.  The bound is attained by driving each qubit with
$\bm h(t)=\frac{\Om}{2\sqrt5}\big(2,-\sin(4\pi t/T),\cos(4\pi t/T)\big)$, the $d=2$ path of
Proposition~\ref{prop:universal} (factors in reverse order, axes relabelled), with $|\bm h_i|=\Om/2$.  This
is the local reference in Fig.~\ref{fig:revivals}(b).

\section{Numerical methods}\label{app:numerics}

All simulations use $\Om=1$.  An error is written $\epsilon X$ with $X=\frac12\sum_ac_a\lambda_a$, where
$\{\lambda_a\}$ is an orthonormal basis of the class ($\tr\lambda_a\lambda_b=2\delta_{ab}$), so $\epsilon$ is
measured in the same units as $|\bm u|$.  For two qubits the in-class basis is
$\{XI,YI,ZI,IX,IY,IZ\}/\sqrt2$.  ``Outside the class'' means errors drawn from the full Gell-Mann basis of
$\mathfrak{su}(d)$.  Constant controls are propagated exactly by diagonalization; time-dependent controls by
products of exact exponentials at the midpoint of each step.

\paragraph*{Fig.~\ref{fig:revivals}(a,b).}  $200$ realizations per curve ($80$ for the local reference),
with independent normal $c_a$ of variance $\epsilon^2/n_S$ ($n_S$ basis elements), so that
$\langle|\bm c|^2\rangle=\epsilon^2$, and $\epsilon=0.1$; realizations are not normalized individually.  The
local reference uses steps of length $T/400$.

\paragraph*{Fig.~\ref{fig:revivals}(c).}  For seven values of $\epsilon$ between $10^{-2.5}$ and
$10^{-0.7}$, the infidelity after one period ($2T$ for $\pm H$) is maximized over $24$ random unit vectors
$\bm c$.  Slopes are least-squares fits of $\log(1-F)$ against $\log\epsilon$ for $\epsilon\le0.06$ and
$1-F>10^{-13}$; uncertainties are standard errors of the slope (Table~\ref{tab:slopes}).

\begin{table}[h]
\caption{Fitted exponents $p$ in $1-F\propto\epsilon^p$ for Fig.~\ref{fig:revivals}(c).}
\label{tab:slopes}
\begin{ruledtabular}
\begin{tabular}{lcc}
control, errors & $p$ & expected\\
\hline
chiral drive, in class & $3.99994\pm0.00002$ & $4$\\
chiral drive, outside & $1.993\pm0.003$ & $2$\\
$XX$--$ZZ$ coupling, in class & $3.996\pm0.002$ & $4$\\
$XX$--$ZZ$ coupling, outside & $2.002\pm0.001$ & $2$\\
chiral, $\pm H$, in class & $5.998\pm0.001$ & $6$\\
$XX$--$ZZ$, $\pm H$, in class & $5.990\pm0.004$ & $6$
\end{tabular}
\end{ruledtabular}
\end{table}

\paragraph*{Minimum-time search.}  Controls are piecewise constant with eight equal segments in all
$d^2-1$ directions of $\mathfrak{su}(d)$, with $\bm u_j=\bm v_j/\sqrt{1+|\bm v_j|^2}$ so that $|\bm u_j|<\Om$.
The constraints are robustness for a basis of the class (six-point Gauss--Legendre quadrature per segment)
and $U(T)\propto I$.  Each start first solves the constraints at a fixed duration $1.25$--$1.75$ times the
bound, from random $\bm v_j$, by nonlinear least squares; it then minimizes $T$ by sequential quadratic
programming (SLSQP, tolerance $10^{-13}$) with exact Jacobians from automatic differentiation.  Qutrit level
shifts: $11$ starts, all converged, $T/T_{\min}$ between $1.30$ and $1.69$.  Two-qubit local errors: $8$
starts, $7$ converged, $T/T_{\min}$ between $1.53$ and $1.68$.  Re-evaluating robustness with $60$-point
quadrature gives residuals below $2\times10^{-8}$.

\bibliography{refs}

\end{document}